\documentclass[review]{elsarticle}

\usepackage{lineno,hyperref}
\modulolinenumbers[5]

\journal{Journal of \LaTeX\ Templates}
\usepackage[utf8x]{inputenc}
\DeclareMathAlphabet\mbi{OML}{cmm}{b}{it}
\DeclareSymbolFont{boldsymbols}{OMS}{cmsy}{b}{n}
\DeclareSymbolFontAlphabet{\mathbfcal}{boldsymbols}

\usepackage{geometry}
\geometry{margin={2cm,1cm}}
\usepackage{amsthm}
\usepackage{amsmath}
\usepackage{amsbsy}
\usepackage{amsfonts}
\usepackage{dsfont}
\usepackage{mathrsfs}
\usepackage{amssymb}
\everymath{\displaystyle}
\newtheorem{thm}{Theorem}[section]

\theoremstyle{definition}

\theoremstyle{Remarque}

\theoremstyle{plain}

\usepackage{graphicx}
\usepackage{rotating}
\usepackage{marvosym}
\usepackage{textcomp}
\usepackage {setspace}
\usepackage{fancyhdr}
\usepackage{hyphenat}
\usepackage{multirow}
\usepackage{pdfpages}
\usepackage{pdftexcmds}
\usepackage{subfigure}
\usepackage{booktabs,caption,fixltx2e}
\usepackage[flushleft]{threeparttable}
\usepackage{latexsym}
\usepackage{lmodern}
\usepackage{a4wide}
\usepackage{lscape}
\usepackage{rotating}
\usepackage[flushleft]{threeparttable}
\DeclareMathOperator*{\argmax}{arg\,max}
\DeclareMathOperator*{\argmin}{arg\,min}
\DeclareMathOperator*{\tr}{tr}




\bibliographystyle{model2-names.bst}\biboptions{authoryear}
\usepackage{filecontents}
\begin{filecontents}{\jobname.bib}
@article{lin1997variance,
  title={Variance component testing in generalised linear models with random effects},
  author={Lin, Xihong},
  journal={Biometrika},
  volume={84},
  number={2},
  pages={309--326},
  year={1997},
  publisher={Biometrika Trust}
}
@article{edwards2008r2,
  title={An R2 statistic for fixed effects in the linear mixed model},
  author={Edwards, Lloyd J and Muller, Keith E and Wolfinger, Russell D and Qaqish, Bahjat F and Schabenberger, Oliver},
  journal={Statistics in medicine},
  volume={27},
  number={29},
  pages={6137--6157},
  year={2008},
  publisher={Wiley Online Library}
}
@article{kramer2005r,
  title={R 2 statistics for mixed models},
  author={Kramer, Matthew},
  journal={Statistics in medicine},
  year={2005}
}
@article{snijders1994modeled,
  title={Modeled variance in two-level models},
  author={Snijders, Tom AB and Bosker, Roel J},
  journal={Sociological methods \& research},
  volume={22},
  number={3},
  pages={342--363},
  year={1994},
  publisher={Sage Publications}
}
@article{xu2003measuring,
  title={Measuring explained variation in linear mixed effects models},
  author={Xu, Ronghui},
  journal={Statistics in medicine},
  volume={22},
  number={22},
  pages={3527--3541},
  year={2003},
  publisher={Wiley Online Library}
}
@article{nakagawa2013general,
  title={A general and simple method for obtaining R2 from generalized linear mixed-effects models},
  author={Nakagawa, Shinichi and Schielzeth, Holger},
  journal={Methods in Ecology and Evolution},
  volume={4},
  number={2},
  pages={133--142},
  year={2013},
  publisher={Wiley Online Library}
}
@article{pu2006selecting,
  title={Selecting mixed-effects models based on a generalized information criterion},
  author={Pu, Wenji and Niu, Xu-Feng},
  journal={Journal of multivariate analysis},
  volume={97},
  number={3},
  pages={733--758},
  year={2006},
  publisher={Elsevier}
}
@book{konishi2008information,
  title={Information criteria and statistical modeling},
  author={Konishi, Sadanori and Kitagawa, Genshiro},
  year={2008},
  publisher={Springer Science \& Business Media}
}
@article{rao1989strongly,
  title={A strongly consistent procedure for model selection in a regression problem},
  author={Rao, Radhakrishna and Wu, Yuehua},
  journal={Biometrika},
  volume={76},
  number={2},
  pages={369--374},
  year={1989},
  publisher={Biometrika Trust}
}
@article{akaike1974new,
  title={A new look at the statistical model identification},
  author={Akaike, Hirotugu},
  journal={IEEE transactions on automatic control},
  volume={19},
  number={6},
  pages={716--723},
  year={1974},
  publisher={Ieee}
}
@article{schwarz1978estimating,
  title={Estimating the dimension of a model},
  author={Schwarz, Gideon and others},
  journal={The annals of statistics},
  volume={6},
  number={2},
  pages={461--464},
  year={1978},
  publisher={Institute of Mathematical Statistics}
}
@article{shao1997asymptotic,
  title={An asymptotic theory for linear model selection},
  author={Shao, Jun},
  journal={Statistica Sinica},
  pages={221--242},
  year={1997},
  publisher={JSTOR}
}
@article{hannan1979determination,
  title={The determination of the order of an autoregression},
  author={Hannan, Edward J and Quinn, Barry G},
  journal={Journal of the Royal Statistical Society. Series B (Methodological)},
  pages={190--195},
  year={1979},
  publisher={JSTOR}
}
@article{bozdogan1987model,
  title={Model selection and Akaike's information criterion (AIC): The general theory and its analytical extensions},
  author={Bozdogan, Hamparsum},
  journal={Psychometrika},
  volume={52},
  number={3},
  pages={345--370},
  year={1987},
  publisher={Springer}
}
@article{mallows1973some,
  title={Some comments on C p},
  author={Mallows, Colin L},
  journal={Technometrics},
  volume={15},
  number={4},
  pages={661--675},
  year={1973},
  publisher={Taylor \& Francis Group}
}
@article{wasserman2009high,
  title={High dimensional variable selection},
  author={Wasserman, Larry and Roeder, Kathryn},
  journal={Annals of statistics},
  volume={37},
  number={5A},
  pages={2178},
  year={2009},
  publisher={NIH Public Access}
}
@article{fan2010selective,
  title={A selective overview of variable selection in high dimensional feature space},
  author={Fan, Jianqing and Lv, Jinchi},
  journal={Statistica Sinica},
  volume={20},
  number={1},
  pages={101},
  year={2010},
  publisher={NIH Public Access}
}
@article{akaike1973information,
title={Information theory and an extension of the maximum likelihood principle},
author={Akaike, Hirotugu},
journal={Second International Symposium on Information Theory},
volume={(Edited by B. N. Petrov and F. Csaki)},
pages={267-281},
year={1973}
}
@article{barron1999risk,
  title={Risk bounds for model selection via penalization},
  author={Barron, Andrew and Birg{\'e}, Lucien and Massart, Pascal},
  journal={Probability theory and related fields},
  volume={113},
  number={3},
  pages={301--413},
  year={1999},
  publisher={Springer}
}
@article{frank1993statistical,
  title={A statistical view of some chemometrics regression tools},
  author={Frank, LLdiko E and Friedman, Jerome H},
  journal={Technometrics},
  volume={35},
  number={2},
  pages={109--135},
  year={1993},
  publisher={Taylor \& Francis Group}
}
@article{tibshirani1996regression,
  title={Regression shrinkage and selection via the lasso},
  author={Tibshirani, Robert},
  journal={Journal of the Royal Statistical Society. Series B (Methodological)},
  pages={267--288},
  year={1996},
  publisher={JSTOR}
}
@article{fan2001variable,
  title={Variable selection via nonconcave penalized likelihood and its oracle properties},
  author={Fan, Jianqing and Li, Runze},
  journal={Journal of the American statistical Association},
  volume={96},
  number={456},
  pages={1348--1360},
  year={2001},
  publisher={Taylor \& Francis}
}
@article{fan1997comments,
  title={Comments on wavelets in statistics: A review by a. antoniadis},
  author={Fan, Jianqing},
  journal={Journal of the Italian Statistical Society},
  volume={6},
  number={2},
  pages={131--138},
  year={1997},
  publisher={Springer}
}
@article{zhang2007penalized,
  title={Penalized linear unbiased selection},
  author={Zhang, Cun Hui},
  journal={Department of Statistics and Bioinformatics, Rutgers University},
  pages={2007--003},
  year={2007}
}
@article{breiman1996heuristics,
  title={Heuristics of instability and stabilization in model selection},
  author={Breiman, Leo and others},
  journal={The annals of statistics},
  volume={24},
  number={6},
  pages={2350--2383},
  year={1996},
  publisher={Institute of Mathematical Statistics}
}
@article{breiman1995better,
  title={Better subset regression using the nonnegative garrote},
  author={Breiman, Leo},
  journal={Technometrics},
  volume={37},
  number={4},
  pages={373--384},
  year={1995},
  publisher={Taylor \& Francis Group}
}
@article{fu1998penalized,
  title={Penalized regressions: the bridge versus the lasso},
  author={Fu, Wenjiang J},
  journal={Journal of computational and graphical statistics},
  volume={7},
  number={3},
  pages={397--416},
  year={1998},
  publisher={Taylor \& Francis}
}
@article{ibrahim2011fixed,
  title={Fixed and random effects selection in mixed effects models},
  author={Ibrahim, Joseph G and Zhu, Hongtu and Garcia, Ramon I and Guo, Ruixin},
  journal={Biometrics},
  volume={67},
  number={2},
  pages={495--503},
  year={2011},
  publisher={Wiley Online Library}
}
@article{ibrahim2008model,
  title={Model selection criteria for missing-data problems using the EM algorithm},
  author={Ibrahim, Joseph G and Zhu, Hongtu and Tang, Niansheng},
  journal={Journal of the American Statistical Association},
  year={2008},
  publisher={Taylor \& Francis}
}
@article{zou2006adaptive,
  title={The adaptive lasso and its oracle properties},
  author={Zou, Hui},
  journal={Journal of the American statistical association},
  volume={101},
  number={476},
  pages={1418--1429},
  year={2006},
  publisher={Taylor \& Francis}
}
@article{fan2012variable,
  title={Variable selection in linear mixed effects models},
  author={Fan, Yingying and Li, Runze},
  journal={Annals of statistics},
  volume={40},
  number={4},
  pages={2043},
  year={2012},
  publisher={NIH Public Access}
}

@article{harville1977maximum,
  title={Maximum likelihood approaches to variance component estimation and to related problems},
  author={Harville, David A},
  journal={Journal of the American Statistical Association},
  volume={72},
  number={358},
  pages={320--338},
  year={1977},
  publisher={Taylor \& Francis}
}
@article{efron2004least,
  title={Least angle regression},
  author={Efron, Bradley and Hastie, Trevor and Johnstone, Iain and Tibshirani, Robert and others},
  journal={The Annals of statistics},
  volume={32},
  number={2},
  pages={407--499},
  year={2004},
  publisher={Institute of Mathematical Statistics}
}
@article{schelldorfer2011estimation,
  title={Estimation for High-Dimensional Linear Mixed-Effects Models Using l1-Penalization},
  author={Schelldorfer, J{\"u}rg and B{\"u}hlmann, Peter and DE, GEER and VAN, SARA},
  journal={Scandinavian Journal of Statistics},
  volume={38},
  number={2},
  pages={197--214},
  year={2011},
  publisher={Wiley Online Library}
}
@article{fazli2011l1,
  title={l1-penalized Linear Mixed-Effects Models for high dimensional data with application to BCI},
  author={Fazli, Siamac and Dan{\'o}czy, M{\'a}rton and Schelldorfer, J{\"u}rg and M{\"u}ller, Klaus-Robert},
  journal={NeuroImage},
  volume={56},
  number={4},
  pages={2100--2108},
  year={2011},
  publisher={Elsevier}
}
@article{dempster1977maximum,
  title={Maximum likelihood from incomplete data via the EM algorithm},
  author={Dempster, Arthur P and Laird, Nan M and Rubin, Donald B},
  journal={Journal of the royal statistical society. Series B (methodological)},
  pages={1--38},
  year={1977},
  publisher={JSTOR}
}
@article{jiang2008fence,
  title={Fence methods for mixed model selection},
  author={Jiang, Jiming and Rao, J Sunil and Gu, Zhonghua and Nguyen, Thuan and others},
  journal={The Annals of Statistics},
  volume={36},
  number={4},
  pages={1669--1692},
  year={2008},
  publisher={Institute of Mathematical Statistics}
}
@article{muller2013model,
  title={Model selection in linear mixed models},
  author={M{\"u}ller, Samuel and Scealy, Janice L and Welsh, Alan H and others},
  journal={Statistical Science},
  volume={28},
  number={2},
  pages={135--167},
  year={2013},
  publisher={Institute of Mathematical Statistics}
}
@article{chen2003random,
  title={Random effects selection in linear mixed models},
  author={Chen, Zhen and Dunson, David B},
  journal={Biometrics},
  volume={59},
  number={4},
  pages={762--769},
  year={2003},
  publisher={Wiley Online Library}
}
@article{saville2009testing,
  title={Testing random effects in the linear mixed model using approximate Bayes factors},
  author={Saville, Benjamin R and Herring, Amy H},
  journal={Biometrics},
  volume={65},
  number={2},
  pages={369--376},
  year={2009},
  publisher={Wiley Online Library}
}
@article{frommlet2016adaptive,
  title={An Adaptive Ridge Procedure for L 0 Regularization},
  author={Frommlet, Florian and Nuel, Gr{\'e}gory},
  journal={PloS one},
  volume={11},
  number={2},
  pages={e0148620},
  year={2016},
  publisher={Public Library of Science}
}
@article{bates2007lme4,
  title={The lme4 package},
  author={Bates, Douglas and Sarkar, Deepayan and Bates, Maintainer Douglas and Matrix, LinkingTo},
  journal={R package version},
  volume={2},
  number={1},
  pages={74},
  year={2007}
}
@Manual{R,
    title = {R: A Language and Environment for Statistical Computing},
    author = {{R Core Team}},
    organization = {R Foundation for Statistical Computing},
    address = {Vienna, Austria},
    year = {2015},
    url = {http://www.R-project.org/},
}
@article{schelldorfer2014glmmlasso,
  title={Glmmlasso: an algorithm for high-dimensional generalized linear mixed models using ?1-penalization},
  author={Schelldorfer, J{\"u}rg and Meier, Lukas and B{\"u}hlmann, Peter},
  journal={Journal of Computational and Graphical Statistics},
  volume={23},
  number={2},
  pages={460--477},
  year={2014},
  publisher={Taylor \& Francis}
}
@article{bates2010lme4,
  title={lme4: Mixed-effects modeling with R},
  author={Bates, Douglas M},
  journal={URL http://lme4. r-forge. r-project. org/book},
  year={2010}
}
@article{biihlmann2008discussion,
  title={Discussion of" One-step sparse estimates in nonconcave penalized likelihood models," by H. Zou and R. Li},
  author={Biihlmann, R and Meier, Lukas},
  journal={Ann. Statist},
  volume={36},
  pages={1534--1541},
  year={2008}
}
@article{candes2008enhancing,
  title={Enhancing sparsity by reweighted ? 1 minimization},
  author={Candes, Emmanuel J and Wakin, Michael B and Boyd, Stephen P},
  journal={Journal of Fourier analysis and applications},
  volume={14},
  number={5-6},
  pages={877--905},
  year={2008},
  publisher={Springer}
}
@incollection{grandvalet1998least,
  title={Least absolute shrinkage is equivalent to quadratic penalization},
  author={Grandvalet, Yves},
  booktitle={ICANN 98},
  pages={201--206},
  year={1998},
  publisher={Springer}
}
@article{rippe2012visualization,
  title={Visualization of genomic changes by segmented smoothing using an L 0 penalty},
  author={Rippe, Ralph CA and Meulman, Jacqueline J and Eilers, Paul HC},
  journal={PloS one},
  volume={7},
  number={6},
  pages={e38230},
  year={2012},
  publisher={Public Library of Science}
}
@article{schelldorfer2011lmmlasso,
  title={lmmlasso: Linear mixed-effects models with Lasso},
  author={Schelldorfer, J},
  journal={R package version 0.1-2},
  year={2011}
}
\end{filecontents}





\begin{document}

\begin{frontmatter}

\title{Fixed effects selection in the linear mixed-effects model using adaptive ridge procedure for $L_0$ penalty performance}


\author[mymainaddress,mysecondaryaddress]{Eric Houngla Adjakossa\corref{correspondingauthor}}
\cortext[correspondingauthor]{Corresponding author}
\ead{ericadjakossah@gmail.com}

\author[mysecondaryaddress]{Gregory Nuel}
\ead{Gregory.Nuel@math.cnrs.fr}

\address[mymainaddress]{International Chair in Mathematical Physics and Applications (ICMPA-UNESCO Chair) /University of Abomey-Calavi, 072 B.P. 50 Cotonou, Republic of Benin}
\address[mysecondaryaddress]{ Laboratoire de Probabilit\'es et Mod\`eles Al\'eatoires /Universit\'e Pierre et Marie Curie, Case courrier 188 - 4, Place Jussieu 75252 Paris cedex 05 France}

\begin{abstract}
This paper is concerned with the selection of fixed effects along with the estimation of fixed effects, random effects and variance components in the linear mixed-effects model. We introduce a selection procedure based on an adaptive ridge (AR) penalty of the profiled likelihood, where the covariance matrix of the random effects is Cholesky factorized. This selection procedure is intended to both low and high-dimensional settings where the number of fixed effects is allowed to grow exponentially with the total sample size, yielding technical difficulties due to the non-convex optimization problem induced by $L_0$ penalties. Through extensive simulation studies, the procedure is compared to the LASSO selection  and appears to enjoy the model selection consistency as well as the estimation consistency.
\end{abstract}

\begin{keyword}
linear mixed-effects model\sep consistent selection \sep iteratively weighted ridge \sep profiled likelihood
\end{keyword}

\end{frontmatter}


\section{Introduction}
During the last two decades, selection procedures in the linear mixed-effects model have been an active research topic due to the appealing features of the model and the advent of modern technologies facilitating the collection of many variables in scientific studies. Many of these variables are typically included in the full model at the initial stage of modeling to reduce model approximation error, and due to the complexity of the mixed-effects models, inferences and interpretations of the estimated models become challenging as the dimension of fixed or random effects increases~\citep{fan2012variable}. The selection of important fixed or random effects has thus become a fundamental problem in the analysis of grouped data using mixed-effects models, especially in the high-dimensional settings where the fixed or the random effects vector dimension is allowed to grow exponentially with the sample size.

Generally, model selection procedures can be viewed as covering three main approaches: the hypothesis testing procedures, the regularization procedures and other procedures which include the Bayesian selection methods.
The testing procedures include the ordinary hypothesis tests and the selection methods based on generalized information criteria. Examples of using testing hypothesis for models selection in the mixed-effects model context include Lin's works~\citep{lin1997variance} who proposed a simple global variance component tests, which are locally asymptotically most precise and are robust in the sense that no assumption about the parametric form of the random effects is made. Despite their global form expressions which require only the fitting of conventional generalized linear models, the critical values of the global test statistics, which are based on large sample theory, are less accurate when the number of levels of each random effect is small, e.g. less than 15. Edwards and his co-workers \citep{edwards2008r2} extended the traditional coefficient of determination $R^2$ for the linear mixed-effects model $y\sim\mathcal{N}(X\beta,\Sigma=Z\Gamma Z^\top+\sigma^2I_N)$, where they introduced a statistic $R^2_\beta=(q-1)\nu^{-1}F(\widehat{\beta},\widehat{\Sigma})/\left[1+(q-1)\nu^{-1}F(\widehat{\beta},\widehat{\Sigma})\right]$, with $F(\widehat{\beta},\widehat{\Sigma})=(C\widehat{\beta})^\top\left[C(X^\top\widehat{\Sigma}^{-1}X)^{-1}C^\top\right]^{-1}C\widehat{\beta}/\text{rank}(C)$, $\nu=N-\text{rank}(X)=N-q$, $C=\left[\boldsymbol{0}_{(q-1)\times 1}I_{q-1}\right]$ of rank $q-1$, in testing $H_0:C\beta=\boldsymbol{0}$. $R^2_\beta$ measures the multivariate association between the repeated outcomes and the fixed effects in the context of longitudinal data analysis. This $R^2_\beta$ statistic arises as a $1-1$ function of an appropriate $F$ statistic (i.e., $F(\widehat{\beta},\widehat{\Sigma})$) for testing all the fixed effects, except the intercept. More precisely, $R^2_\beta$ compares the full model with a null model having no fixed effect except typically the intercept. $R^2_\beta$ is then generalized to define a partial $R^2$ statistic for marginal fixed effects of all sorts. Although this testing-based selection procedure of fixed effects is very useful, one of its major drawback is that the choice of the denominator of $R^2_\beta$ clearly affects the rate of convergence as $N\rightarrow\infty$, and may change the parameter being estimated. Examples of $R^2$-based selection of fixed effects in the mixed-effects model include~\cite{snijders1994modeled}, \cite{xu2003measuring}, \cite{kramer2005r} and references therein, where generally,  too much restrictions are made on the random effects covariance matrix, and clearly may not be appropriate for a wide range of data analysis. 

Based on testing procedures, a stepwise procedure can be constructed for selecting important fixed or random effects using generalized information criteria~\citep{pu2006selecting,fan2012variable}, which are a generalization of Akaike's information criterion (AIC)~\citep{akaike1974new} and the Bayesian information criterion (BIC)~\citep{schwarz1978estimating}. \cite{pu2006selecting} extended the Generalized information criterion (GIC) proposed by~\cite{rao1989strongly} in order to construct a procedure for selecting fixed and random effects in the linear mixed-effects model. Here, following~\cite{shao1997asymptotic}, the asymptotic behavior of the extended GIC method for selecting fixed effects is studied, and the results from simulations show that if the signal-to-noise ratio is moderate or high, the percentages of choosing the correct fixed effects by the GIC procedure are close to one for finite samples. Another examples of GIC like selection procedures include those proposed by~\cite{mallows1973some}, \cite{hannan1979determination} and ~\cite{bozdogan1987model}. These strategies suggest a unified approach for choosing a parameters vector $\beta$ that maximizes the penalized likelihood
\begin{equation}
n^{-1}\ell_n(\beta|y)-\sum_{j=1}^p p_\lambda(|\beta_j|)
\end{equation}
which arose from the Kullback-Leibler (KL) divergence $-\ell_n(\widehat{\beta}|y)+\lambda\|\beta\|_0$ of the fitted model from the true model~\citep{akaike1973information}, where $\ell_n(.|y)$ is the log-likelihood function, $\widehat{\beta}$ is the maximum likelihood estimator of $\beta$ and $p_\lambda$ is the penalty function indexed by the regularization parameter $\lambda\geqslant 0$. The $L_0$-norm $\|\beta\|_0$ of $\beta$ counts the number of non-vanishing components ($\beta_j\neq 0$) in $\beta$, and arises naturally in many classical model selection methods. Although involving a nice interpretation of the best subset selection, and admitting good sampling properties~\citep{barron1999risk}, its computation is infeasible in high dimensional statistical endeavors~\citep{fan2010selective}. Other penalty functions are regularly used in the literature. A natural generalization of $L_0$ penalty is the so-called bridge (noted $L_{q}$) penalty in~\cite{frank1993statistical}, where $p_\lambda(|\beta|)=\lambda\|\beta\|_{L_q}^q$ for $0<q\leqslant 2$. The $L_{q}$ penalty encompasses $L_0$, $L_1$ - LASSO~\citep{tibshirani1996regression} - and $L_2$ (ridge) penalties. Since none of the $L_q$ penalties satisfies all the required properties (sparsity, approximate unbiasedness and continuity, see~\cite{fan2001variable} for more details) for their resulting parameter estimators, other penalties including SCAD~\citep{fan1997comments,fan2001variable} and MCP~\citep{zhang2007penalized} are introduced in the literature.

Testing-based stepwise selection procedures, where $\lambda$ is fixed ($\lambda=1$ for AIC, $\lambda=\log(n)/2$ for BIC, $\lambda=\log(\log n)/2$ for HQIC~\citep{hannan1979determination}, $\lambda=(\log n+1)/2$ for CAIC~\citep{bozdogan1987model}, for example) are computationally expensive for high-dimensional settings and ignore stochastic errors in the variable selection process~\citep{fan2001variable}. Another severe drawback is their lack of stability~\citep{breiman1996heuristics}. Penalized likelihood approaches using data-driven choice of $\lambda$ are generally preferred to handle the high-dimensional selection problem. These methods are referred to as regularization procedures. Here, the regularization parameter $\lambda$ (also called the tuning parameter) is generally estimated using cross-validation methods~\citep{breiman1995better, tibshirani1996regression, fu1998penalized, fan2001variable}. \cite{ibrahim2011fixed} consider the selection of both fixed and random effects in a general class of mixed effects models by optimizing the penalized likelihood criterion $Q_\lambda(\theta|\theta_{\text{old}})=Q(\theta|\theta_{\text{old}})-n\sum_{j=1}^p p_\lambda(|\beta_j|)$ using the EM algorithm~\citep{dempster1977maximum}, where $\theta$ is the vector of all the unknown parameters, $Q(.|\theta_{\text{old}})$ is the resulting function of the EM algorithm E-step, and $p_\lambda$ is either the SCAD or the adaptive lasso - ALASSO - \citep{zou2006adaptive} penalty. They approximate the integral in $Q(\theta|\theta_{\text{old}})$ by using a Markov chain Monte Carlo and introduce the $\text{IC}_Q(\lambda)=-2Q(\widehat{\theta}_\lambda|\widehat{\theta}_0)+c_n(\widehat{\theta}_\lambda)$ statistic~\citep{ibrahim2008model} for selecting the regularization parameter, where $\widehat{\theta}_0=\argmax_\theta\ell(\theta)$ is the unpenalized maximum likelihood estimate and $c_n(\theta)$ a function of the data and fitted model.

\cite{fan2012variable} introduce a class of variable selection methods for fixed effects using a penalized profiled likelihood method, where the unknown covariance matrix of the random effects is replaced with a suitable proxy matrix. Here, the general idea used is as follows. After writing the joint density $f(y,\gamma)$ of the response variable $y$ and the random effects vector $\gamma$, they consider the penalized profiled likelihood $L_n(\beta,\widehat{\gamma}(\beta))-n\sum_{j=1}^{d_n} p_\lambda(|\beta_j|)$, where $L_n(\beta,\widehat{\gamma}(\beta))=f(y,\widehat{\gamma}(\beta))$ with $\widehat{\gamma}(\beta)$ the empirical Bayes estimate of $\gamma$~\citep{harville1977maximum} in which the covariance matrix of $\gamma$ is replaced with the proxy matrix. $d_n$ may increase with the sample size $n$ and $p_\lambda$ is a concave penalty function (SCAD and LASSO in their simulation and application section). Although the proxy matrix may be different from the true one, it may still yield correct model selection results at the cost of some additional bias~\citep{fan2012variable}.  \cite{schelldorfer2011estimation} deal with theoretical and computational aspects for high-dimensional selection of fixed effects in the linear mixed-effects model, where the consistency of the estimator is proven along with a non-asymptotic oracle result for the adaptive LASSO estimator, under the assumption that the eigenvalues of $Z^\top Z$ are bounded. Here, $Z$ is the random effects design matrix, and an explicit analytical expression of the regularization parameter $\lambda$ is also given.

Other approches for variable selection in linear mixed-effects models include ``fence'' procedure and Bayesian techniques. \cite{jiang2008fence} introduce a class of strategies known as a fence methods intended for variable selection in both linear and generalized linear mixed-effects models. The fence strategy is based on a measure of lack-of-fit that is a quantity $Q_M=Q_M(y,\theta_M)$, where $y$ is the response variable, $M$ indicating a candidate model for the selection and $\theta_M$ denotes the vector of parameters under $M$. Counting among the typically rare and ad-hoc selection procedures, the fence method is computationally very demanding, particularly because it involves the estimation of the standard deviation of the difference of lack-of-fit measures. For more details on these ad-hoc selection procedures, see the nice review paper of~\cite{muller2013model} and references therein. Bayesian model selection requires to assign a prior distribution over the model parameters and compute the posterior probabilities of each of them. These computations can be difficult so are usually carried out by applying sophisticated Markov Chain Monte Carlo (MCMC) algorithms~\citep{muller2013model}. Exemples of Bayesian model selection include~\cite{chen2003random} and~\cite{saville2009testing} who point out that these kinds of MCMC methods are generally time consuming to implement, requiring special software and depend on subjective choice of the hyperparameters in the priors.

In this paper, we discuss the selection of fixed effects in the linear mixed-effects model using an adaptive ridge (AR) penalty of the profiled log-likelihood, where the random effect covariance matrix is Cholesky factorized for solving a preliminary penalized least square problem. The profiled likelihood is calculated by slightly modifying the approach proposed by~\cite{bates2007lme4}. The weights matrix of the AR procedure introduced here is updated in such a way that the procedure converges toward selection with $L_0$ penalty, as have done~\cite{frommlet2016adaptive}. The present selection strategy is intended to both low and high-dimensional settings.

The rest of the article is organized as follows. The profiled log-likelihood is introduced in Section~2. In Section~3, we present the weighted ridge procedure, and Section~4 presents the simulation studies.
\section{profiled log-likelihood for the linear mixed-effects model}
In this section, we consider the classical linear mixed-effect model setting where the number of observations $n$ is larger than the number of covariates $p$. By slightly modifying the approach introduced by~\cite{bates2007lme4}, we calculate the profiled likelihood function.

\subsection{Model and notations}
 We consider the linear mixed-effects model in which the residual terms are homoscedastic and independent of the random effects as follows.
 
 \begin{equation}\label{eq:model11}
 \mathcal{Y}=X\beta+Z\gamma+\varepsilon,
 \end{equation}
 \begin{equation}\label{eq:model12}
\gamma\sim\mathcal{N}(\boldsymbol{0},\Gamma),\quad\varepsilon\sim\mathcal{N}(\boldsymbol{0},\sigma^2I_n)\quad\text{and }\gamma\perp\varepsilon,
 \end{equation}
where $\mathcal{Y}$ is the random response variable whose observed value is $y_{\text{obs}}\in\mathbb{R}^n$, $\gamma\in\mathbb{R}^q$ is the unobserved random effects vector with covariance matrix $\Gamma$, $\varepsilon$ is the residual term, $\beta\in\mathbb{R}^p$ is the fixed effects vector, and $X$ and $Z$ are the fixed and random effects related design matrices of dimensions $n\times p$ and $n\times q$, respectively. $\sigma^2I_n$ is the covariance matrix of $\varepsilon$ with $\sigma > 0$ and $I_n$ the $n\times n$ identity matrix.

As a variance-covariance matrix, $\Gamma$ must be positive semidefinite. Conveniently, the model is expressed in terms of a relative covariance factor, $\Lambda_\theta$, which is a $q\times q$ matrix, depending on the variance components vector, $\theta$, that generate the symmetric $q\times q$ variance-covariance matrix, $\Gamma$, according to
\begin{equation}
\Gamma=\sigma^2\Lambda_\theta\Lambda_\theta^\top,
\end{equation}
where $\sigma$ is the same scale parameter as in Equation~(\ref{eq:model12}). This factorization of $\Gamma$ yields the existence of a random vector $\mathcal{U}$ such that 
\begin{equation}
\gamma=\Lambda_\theta\mathcal{U},
\end{equation}
with
\begin{equation}
\mathcal{U}\sim\mathcal{N}(\boldsymbol{0},\sigma^2I_q)
\end{equation}
which is called a \textit{spherical random effects}\footnote{\cite{bates2007lme4} argued that the term ``spherical'' is related to the fact that the contours of the $\mathcal{N}(0,\sigma^2I_q)$ probability density are spheres.} variable. The model can therefore be re-expressed as follows
 \begin{equation}\label{eq:model21}
 \mathcal{Y}=X\beta+Z\Lambda_\theta\mathcal{U}+\varepsilon,
 \end{equation}
 \begin{equation}\label{eq:model22}
\mathcal{U}\sim\mathcal{N}(\boldsymbol{0},\sigma^2I_q),\quad\varepsilon\sim\mathcal{N}(\boldsymbol{0},\sigma^2I_n)\quad\text{and }\quad\mathcal{U}\perp\varepsilon.
 \end{equation}
 The parameters of the model are the fixed effects vector $\beta$ and the variance components $\theta$ and $\sigma^2$. This formulation of the linear mixed-effects model allows, not only, the use of a singular matrix $\Lambda_\theta$ which arises in practice, but also for a relatively compact expression for the profiled log-likelihood of $\beta$ and $\theta$, conditional on $y_{\text{obs}}$.

\subsection{profiled likelihood}
The profiled likelihood considered here is expressed through the following theorem.
\begin{thm}\label{theorem1}
Suppose that $y$ is a realization of a random vector $\mathcal{Y}$ satisfying the linear mixed-effects model expressed by Equations~(\ref{eq:model21}) and~(\ref{eq:model22}), where $\beta$, $\theta$ and $\sigma^2$ are the parameters to be estimated. Denoting by $L_\theta$ the matrix such that $L_\theta^\top L_\theta=(Z\Lambda_\theta)^\top Z\Lambda_\theta+\text{I}_q$, $\tilde{u}$ the conditional mean of $\mathcal{U}$ given that $\mathcal{Y}=y$, and $g(\tilde{u})=\|y-X\beta-Z\Lambda_\theta \tilde{u}\|^2+\|\tilde{u}\|^2$, the profiled log-likelihood of $\beta$ and $\theta$ conditional on $y$ is 
\begin{equation}
\tilde{\ell}(\beta,\theta|y)=-\frac{1}{2}\log|L_\theta|^2-\frac{n}{2}\left[1+\log \left(\frac{2\pi g(\tilde{u})}{n}\right)\right].
\end{equation}
\end{thm}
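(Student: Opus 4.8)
The plan is to recover the marginal likelihood of $(\beta,\theta,\sigma^2)$ by integrating the spherical random effects $\mathcal{U}$ out of the joint density of $(\mathcal{Y},\mathcal{U})$, and then to profile out the scale parameter $\sigma^2$ analytically.

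First I would write down the joint density. From Equations~(\ref{eq:model21})--(\ref{eq:model22}) one has $\mathcal{Y}\mid\mathcal{U}=u\sim\mathcal{N}(X\beta+Z\Lambda_\theta u,\sigma^2 I_n)$ and $\mathcal{U}\sim\mathcal{N}(\boldsymbol 0,\sigma^2 I_q)$, hence
\[
f_{\mathcal{Y},\mathcal{U}}(y,u)=(2\pi\sigma^2)^{-(n+q)/2}\exp\!\left(-\frac{1}{2\sigma^2}\left[\|y-X\beta-Z\Lambda_\theta u\|^2+\|u\|^2\right]\right).
\]
Writing $r=y-X\beta$, the bracketed term expands to $r^\top r-2r^\top Z\Lambda_\theta u+u^\top L_\theta^\top L_\theta u$, which is exactly where the definition $L_\theta^\top L_\theta=(Z\Lambda_\theta)^\top Z\Lambda_\theta+I_q$ is used. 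Completing the square in $u$ about the minimizer $\tilde u$ of this quadratic — equivalently the solution of the normal equations $L_\theta^\top L_\theta\tilde u=(Z\Lambda_\theta)^\top r$ — turns the bracket into $(u-\tilde u)^\top L_\theta^\top L_\theta(u-\tilde u)+g(\tilde u)$, the residual constant being simply the value of the quadratic at its minimizer, i.e. $\|r-Z\Lambda_\theta\tilde u\|^2+\|\tilde u\|^2=g(\tilde u)$. Since $f_{\mathcal{Y},\mathcal{U}}(y,\cdot)$ is, as a function of $u$, proportional to the conditional density of $\mathcal{U}$ given $\mathcal{Y}=y$, this simultaneously identifies $\tilde u$ with the conditional mean of $\mathcal{U}$ given $\mathcal{Y}=y$ and the conditional law with $\mathcal{N}(\tilde u,\sigma^2(L_\theta^\top L_\theta)^{-1})$.

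Next I would carry out the Gaussian integral over $u$: $\int_{\mathbb{R}^q}\exp\!\big(-\tfrac{1}{2\sigma^2}(u-\tilde u)^\top L_\theta^\top L_\theta(u-\tilde u)\big)\,du=(2\pi\sigma^2)^{q/2}|\det L_\theta|^{-1}$, using $\det(L_\theta^\top L_\theta)=|L_\theta|^2$. Combining this with the prefactor $(2\pi\sigma^2)^{-(n+q)/2}$ and the surviving factor $\exp(-g(\tilde u)/(2\sigma^2))$ and taking logarithms gives the marginal log-likelihood
\[
\ell(\beta,\theta,\sigma^2\mid y)=-\frac{n}{2}\log(2\pi\sigma^2)-\frac{1}{2}\log|L_\theta|^2-\frac{g(\tilde u)}{2\sigma^2}.
\]
Because $\tilde u$ and $g(\tilde u)$ do not depend on $\sigma^2$, solving $\partial\ell/\partial\sigma^2=0$ yields the maximizer $\widehat{\sigma}^2=g(\tilde u)/n$ (a maximum, since the second derivative is negative there); substituting it back and noting $g(\tilde u)/(2\widehat{\sigma}^2)=n/2$ produces $-\tfrac12\log|L_\theta|^2-\tfrac{n}{2}\big[1+\log(2\pi g(\tilde u)/n)\big]$, which is the claimed profiled log-likelihood $\tilde\ell(\beta,\theta\mid y)$.

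I do not expect a substantive obstacle: the argument is a completion of the square, a Gaussian integral, and a one-dimensional maximization. The only points needing a little care are (i) recognizing that the constant left after completing the square is exactly $g(\tilde u)$ — immediate, as it is the value of the quadratic at its minimizer — and (ii) justifying that the $\tilde u$ produced by the algebra coincides with the conditional mean of $\mathcal{U}$ given $\mathcal{Y}=y$, which follows from the joint Gaussianity of $(\mathcal{Y},\mathcal{U})$ and the proportionality of $u\mapsto f_{\mathcal{Y},\mathcal{U}}(y,u)$ to the corresponding conditional density.
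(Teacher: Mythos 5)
Your proposal is correct and follows essentially the same route as the paper: form the joint density of $(\mathcal{Y},\mathcal{U})$, reduce the exponent to $g(\tilde u)+\|L_\theta(u-\tilde u)\|^2$ (your explicit completion of the square is the same computation as the paper's exact second-order Taylor expansion of the quadratic $g$ about its minimizer), evaluate the Gaussian integral to pick up $|L_\theta|^{-1}$, and profile out $\sigma^2$ via $\widehat{\sigma}^2=g(\tilde u)/n$. Your added remark identifying $\tilde u$ with the conditional mean of $\mathcal{U}$ given $\mathcal{Y}=y$ is a small justification the paper asserts without proof, but it does not change the argument.
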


\begin{proof}
Denoting by $u$ a realization of $\mathcal{U}$, the density of $y$ is expressed as

\begin{equation}\nonumber
f(y)=\int_{\mathbb{R}^q} f(y,u)du=\int_{\mathbb{R}^q} f(y|u)f(u)du.
\end{equation}

\begin{equation}
f(y|u)f(u)=(2\pi\sigma^2)^{-(n+q)/2}\exp\left[-\frac{\|y-X\beta-Z\Lambda_\theta u\|^2+\|u\|^2}{2\sigma^2}\right],
\end{equation}
where $\|u\|^2=u^\top u$, with $^\top$ denoting the transpose operator.
\begin{equation}
\|y-X\beta-Z\Lambda_\theta u\|^2+\|u\|^2=\left\|\begin{pmatrix}y-X\beta\\0\end{pmatrix}-\begin{pmatrix}Z\Lambda_\theta\\\text{I}_q\end{pmatrix}u\right\|^2=g(u),
\end{equation}
and solving the penalized least squares problem that is to minimize $g(u)$ over $u$ implies
\begin{equation}\label{eq_gu}
\tilde{u}=\argmin_{u\in\mathbb{R}^q}g(u)\iff \begin{pmatrix}Z\Lambda_\theta\\\text{I}_q\end{pmatrix}^\top\begin{pmatrix}Z\Lambda_\theta\\\text{I}_q\end{pmatrix}\tilde{u}=\begin{pmatrix}Z\Lambda_\theta\\\text{I}_q\end{pmatrix}^\top\begin{pmatrix}y-X\beta\\0\end{pmatrix}.
\end{equation}
Viewed as a function of $u$, $g$ is $C^\infty$ and the first and the second differential of g are

\begin{equation}\nonumber
\text{d}g(u)=2\tr\left\{-(y-X\beta-Z\Lambda_\theta u)^\top Z\Lambda_\theta\text{d}u+u^\top\text{d}u\right\}
\end{equation}
and
\begin{equation}\nonumber
\text{d}^2g(u)=2\tr\left\{(Z\Lambda_\theta)^\top Z\Lambda_\theta\text{d}u\text{d}u^\top+\text{d}u\text{d}u^\top\right\}.
\end{equation}
This implies that

\begin{equation}\nonumber
\frac{\partial g}{\partial u}(u)=-2\left(y-X\beta-Z\Lambda_\theta u\right)^\top Z\Lambda_\theta+2u^\top
\end{equation}
and
\begin{equation}\nonumber
\frac{\partial^2 g}{\partial u\partial u^\top}(u)=2\left(Z\Lambda_\theta\right)^\top Z\Lambda_\theta+2\text{I}_q.
\end{equation}
Then, $g(u)$ can be rewritten as
\begin{equation}
g(u)=g(\tilde{u})+\frac{\partial g}{\partial u}(\tilde{u})(u-\tilde{u})+\frac{1}{2}(u-\tilde{u})^\top\frac{\partial^2 g}{\partial u\partial u^\top}(\tilde{u})(u-\tilde{u}).
\end{equation}
Referring to Equation~(\ref{eq_gu}),

\begin{equation}
\begin{pmatrix}Z\Lambda_\theta\\\text{I}_q\end{pmatrix}\left[\begin{pmatrix}y-X\beta\\0\end{pmatrix}-\begin{pmatrix}Z\Lambda_\theta\\\text{I}_q\end{pmatrix}\tilde{u}\right]=0\implies \frac{\partial g}{\partial u}(\tilde{u})=0.
\end{equation}
Then,

\begin{eqnarray}
g(u)&=&g(\tilde{u})+\frac{1}{2}(u-\tilde{u})^\top\frac{\partial^2 g}{\partial u\partial u^\top}(\tilde{u})(u-\tilde{u})\nonumber\\
&=&g(\tilde{u})+(u-\tilde{u})^\top\left[\left(Z\Lambda_\theta\right)^\top Z\Lambda_\theta+\text{I}_q\right](u-\tilde{u})\nonumber\\
&=&g(\tilde{u})+\|L_\theta(u-\tilde{u})\|^2\text{, with }L_\theta^\top L_\theta=\left(Z\Lambda_\theta\right)^\top Z\Lambda_\theta+\text{I}_q,
\end{eqnarray}
and
\begin{eqnarray}
f(y)&=&(2\pi\sigma^2)^{-(n+q)/2}\int_{\mathbb{R}^q} \exp\left[-\frac{g(\tilde{u})+\|L_\theta(u-\tilde{u})\|^2}{2\sigma^2}\right]du\nonumber\\
&=&(2\pi\sigma^2)^{-(n+q)/2}\exp\left[-\frac{g(\tilde{u})}{2\sigma^2}\right]\int_{\mathbb{R}^q} \exp\left[-\frac{\|L_\theta(u-\tilde{u})\|^2}{2\sigma^2}\right]du.
\end{eqnarray}
$v=L_\theta(u-\tilde{u})\implies du=\frac{1}{|L_\theta|}dv$, and

\begin{eqnarray}
f(y)&=&(2\pi\sigma^2)^{-n/2}|L_\theta|^{-1}\exp\left[-\frac{g(\tilde{u})}{2\sigma^2}\right]\int_{\mathbb{R}^q} (2\pi\sigma^2)^{-q/2}\exp\left[-\frac{\|v\|^2}{2\sigma^2}\right]dv\nonumber\\
&=&(2\pi\sigma^2)^{-n/2}|L_\theta|^{-1}\exp\left[-\frac{g(\tilde{u})}{2\sigma^2}\right]
\end{eqnarray}
The log-likelihood of $\beta$, $\theta$ and $\sigma^2$ conditional on $y$ is
\begin{equation}\label{eq:loglik}
\ell(\beta,\theta,\sigma^2|y)=-\frac{n}{2}\log(2\pi\sigma^2)-\frac{1}{2}\log|L_\theta|^2-\frac{g(\tilde{u})}{2\sigma^2}.
\end{equation}
$\frac{\partial\ell(\beta,\theta,\sigma^2|y)}{\partial\sigma^2}=0\implies\sigma^2=\frac{g(\tilde{u})}{n}$. 
By profiling out $\sigma^2$, the profiled log-likelihood, $\tilde{\ell}(\beta,\theta|y)$, of $\beta$ and $\theta$ conditional on $y$ is expressed through
\begin{equation}\nonumber
-2\tilde{\ell}(\beta,\theta|y)=\log|L_\theta|^2+n\left[1+\log \left(\frac{2\pi g(\tilde{u})}{n}\right)\right].
\end{equation}
\end{proof}

\section{$L_0$ estimator of $\beta$ using iteratively weighted ridge procedure}
The $L_0$ estimator of the fixed-effects vector $\beta$ using iteratively weighted ridge procedure presented here fits both the low and the high-dimensional settings. Theoretically, this selection method may enjoy great stability since it uses neither inverse of $X$ (fixed-effects design matrix) nor inverse of $Z$ (random effects design matrix).
\subsection{Adaptive Ridge penalty for the profiled likelihood}
Let us assume that the true underlying fixed-effects vector $\beta_{\text{true}}$ is sparse in the sense that many of its coefficients are zero. To enforce the sparsity of the estimator of $\beta$, we add a weighted $L_2$ (ridge) penalty for the fixed-effects vector $\beta$ to the profiled log-likelihood function. Thus, we are considering the objective function
\begin{equation}\label{eq:w}
\tilde{\ell}_{\lambda,w}(\beta,\theta)=-2\tilde{\ell}(\beta,\theta|y_{\text{obs}})+\lambda\beta^\top W\beta,
\end{equation}
where $W=\text{diag}(w_1,\dots,w_p)$ is a $p\times p$ diagonal matrix and $\lambda\geqslant 0$ is a regularization parameter. We aim at estimating $\beta$, $\theta$ and $\sigma^2$ by
\begin{equation}
(\tilde{\beta}_{\lambda,w},\tilde{\theta}_{\lambda,w})=\argmin_{\beta,\theta}\tilde{\ell}_{\lambda,w}(\beta,\theta)\quad\text{ and }\quad\widetilde{\sigma^2}_{\lambda,w}=\frac{\|y-X\tilde{\beta}_{\lambda,w}-Z\Lambda_{\tilde{\theta}_{\lambda,w}}\tilde{u}\|^2+\|\tilde{u}\|^2}{n},
\end{equation}
where $\tilde{u}$ is calculated by the penalized least squares algorithm that has been hinted at Equation~(\ref{eq_gu}). The criterion $\tilde{\ell}_{\lambda,w}(\beta,\theta)$ is minimized using one of the constrained optimization functions in R~\citep{R}, to provide the estimators $\tilde{\beta}_{\lambda,w}$, $\tilde{\theta}_{\lambda,w}$ and $\widetilde{\sigma^2}_{\lambda,w}$.

\subsection{Iteratively Weighted ridge procedure}
The $L_0$ penalty for regularization arises naturally in many classical model selection since, indeed, it counts the number of non-vanishing parameters, giving a nice interpretation of the best subset selection and admits nice sampling properties~\citep{barron1999risk}. However, its computation is infeasible in high dimensional settings and clearly argued to be a combinational problem with NP-complexity~\citep{fan2010selective}. Some workarounds are reported in the literature, where the proposed procedure converge toward the $L_0$-penalty based selection. For example, \cite{frommlet2016adaptive} introduced an adaptive ridge procedure that helps to approximate $L_0$-penalty performances. Here, we are using the same procedure where the weight matrix ($W$ defined in Equation~(\ref{eq:w})) diagonal elements $w_j, 1\leqslant j\leqslant p$ are iteratively computed and defined as
\begin{equation}\label{eq:w_j}
w_j^{(k)}=\left[\left|\beta_j^{(k)}\right|^2+\delta^2\right]^{-1},
\end{equation}
as have done~\cite{frommlet2016adaptive}, taking inspiration from~\cite{grandvalet1998least,biihlmann2008discussion, candes2008enhancing, rippe2012visualization} and their simulation results. In Equation~(\ref{eq:w_j}), $k$ identifies the iteration and $\beta_j$ is the $j$th component of $\beta$. The general expression of $w_j^{(k)}$ is $w_j^{(k)}=\left[\left|\beta_j^{(k)}\right|^\tau+\delta^\tau\right]^{\frac{q-2}{\tau}}$, where $q$ precises the norm $\|\cdot\|_{L_q}$ for the penalty, $\delta$ calibrates which effect sizes are considered relevant and $\tau$ determines the quality of the approximation $w_j\beta_j^2\approx\left|\beta_j\right|^q$. In practice, $\delta=10^{-5}$ seems to perform well. For more details, see~\cite{frommlet2016adaptive}. More precisely, the selection procedure performed here is as follows. For a fixed $\lambda$, $\tilde{\beta}_{\lambda,w}$ is initialized at $(1,\dots,1)^\top$, the vector $selection$, say, which identifies the selected fixed effects is initialized at $(1,\dots,1)^\top$ and $W$ is initialized at $\text{diag}(1,\dots,1)$. Then the steps come.
\begin{enumerate}
\item[1)] $\tilde{\beta}_{\lambda,\text{old}}\leftarrow\tilde{\beta}_{\lambda,w}$
\item[2)] perform the optimization $(\tilde{\beta}_{\lambda,w},\tilde{\theta}_{\lambda,w})=\argmin_{\beta,\theta}\tilde{\ell}_{\lambda,w}(\beta,\theta)$, initializing $\beta$ by $\tilde{\beta}_{\lambda,\text{old}}$ and $\theta$ by $\theta_0$. Here, the components of $\theta$ which are variances are initialized by $1$ and those which are not variances are initialized by $0$. Thus, $\theta_0$ components are $0$ or $1$.
\item[3)] $w_j\leftarrow\left(\tilde{\beta}_{\lambda,w,j}^2+\delta^2\right)^{-1}$, where $\tilde{\beta}_{\lambda,w,j}$ is the $j$th component of $\tilde{\beta}_{\lambda,w}$, and $w_j$ is the $j$th element of $W$'s diagonal.
\item[4)] $selection_{\text{old}}\leftarrow selection$ and $selection\leftarrow W\cdot\text{diag}(\tilde{\beta}_{\lambda,w})\cdot\tilde{\beta}_{\lambda,w}$
\item[5)] if $|selection-selection_{\text{old}}|<\text{tol}=10^{-5}$, then the selection can be considered as well performed for $\lambda$. Thus, we choose a new value for $\lambda$. If $|selection-selection_{\text{old}}|\geqslant\text{tol}$, the selection does not perform well and we go to the item 1) by choosing the current $\tilde{\beta}_{\lambda,w}$ as $\tilde{\beta}_{\lambda,\text{old}}$, without changing the value of $\lambda$.
\end{enumerate}
For some $\lambda$ values, the $selection$ vector may contain $0$ or $1$ as components. If it is $1$, the corresponding fixed effect $\beta_j$ is selected, and if it is $0$ the corresponding $\beta_j$ is not selected for $\lambda$.

For the choice of the regularization parameter $\lambda$, we propose to use the Bayesian Information Criterion (BIC) criterion defined as
\begin{equation}
c_{n,\lambda}=-2\ell(\hat{\beta}_{\lambda,\text{sel}},\hat{\theta}_{\lambda,\text{sel}},\widehat{\sigma^2}_{\lambda,\text{sel}}|y_{\text{obs}})+\log(n)\cdot\hat{d}_{\lambda},
\end{equation}
where $\hat{\beta}_{\lambda,\text{sel}},\hat{\theta}_{\lambda,\text{sel}}$ and $\widehat{\sigma^2}_{\lambda,\text{sel}}$ are the ML parameters estimators considering the selected variables. $\hat{d}_{\lambda}=\#\{\hat{\beta}_{\lambda,\text{sel},j}\neq 0: 1\leqslant j\leqslant p\}+\dim(\theta)+1$ is the sum of the number of nonzero fixed-effects and the number of variance components. This form of $\hat{d}_\lambda$ has been suggested by \cite{bates2010lme4} and empirically validated by \cite{schelldorfer2014glmmlasso}.

\section{Simulation studies}
In this section, we assess the performance of our approach using simulated data. We compare the obtained results with those coming from the Lasso implementation in a situation where the number of noise variables are excessive. 

Since we use one of the optimizers available in the R software for minimizing the $\tilde{\ell}_{\lambda,w}(\beta,\theta)$ criterion, for too bigger values of $p$, especially when $n\ll p$, the convergence of the used algorithm (``nlminb'' for example) is hardly or no more reached. Due to this convergence problem, we restrict the simulation studies to the low-dimensional setting where $p>40$ and will focus on this problem in another coming paper with the same theoretical approach.

We restrict ourselves to the case of longitudinal data study with $N$ observations coming from $n$ subjects where each subject $i$ has $n_i$ observations. The ``working" data sets are simulated under the following model.
\begin{equation}
y_i=X_{1i}\beta^*+\gamma_i\mathds{1}_{n_i}+\varepsilon_i, \quad \gamma_i\sim\mathcal{N}(0,\Gamma),\quad \varepsilon_i\sim\mathcal{N}(0,\sigma^2I_{n_i}), \quad \text{for}\quad i=1,\dots,n;
\end{equation}
where $\beta^*\in\mathbb{R}^{p_1}$ is the true fixed effects vector, $X_1$ is a $N\times p_1$ design covariates matrix, $\mathds{1}_{n_i}=(1,\dots,1)\in\mathbb{R}^{n_i}$. $\gamma\perp\varepsilon$ and $\gamma_i\perp\gamma_{i'}$ for $i\neq i'$. $\gamma_i$ is an random intercept for the $i$th subject and $\varepsilon$ is the residual term of the model. 

We suppose that we are following up a sample of subjects where the goal is to evaluate how their weights are influenced by other variables including the age, the sex and the nutrition score ``nscore", and which variables govern this influence. The covariates sex, nscore and age are staked in the model matrix $X_1$ and all other covariates are staked in another $N\times p_2$ model matrix $X_2$ such that $X=(X_1|X_2)$ with $\dim(X)=N\times p$ and $p=p_1+p_2$. The components of the vector (variable) age are randomly sampled from a uniform distribution in $[18,37]$ and the nscore variable is also uniform in $[20,50]$. We would like to fit to the data, the model
\begin{equation}
y_i=X_{i}\beta+\gamma_i\mathds{1}_{n_i}+\varepsilon_i, \quad \gamma_i\sim\mathcal{N}(0,\Gamma),\quad \varepsilon_i\sim\mathcal{N}(0,\sigma^2I_{n_i}), \quad \text{for}\quad i=1,\dots,n;
\end{equation}
where $\beta=(\beta_1^\top,\beta_2^\top)^\top$, with $\beta_1\in\mathbb{R}^{p_1}$ and $\beta_2=\boldsymbol{0}\in\mathbb{R}^{p_2}$. We are then challenging to identify which $\beta$ components are zero. 

For the working data sets, we choose $p_1=4$, $p=54$, $N=300$, $n=90$, $\sigma=1$, $\Gamma=1$ and $\beta^*=(1,-1,-1,1)$. For the fixed-effects $\beta$, we have in fact fifty zeros components and only four components are not zeros. We simulate 100 data sets for which we perform both the lasso selection and the IWR (Iteratively Weighted Ridge) procedure. One hundred values of the regularization parameter $\lambda$ are chosen in $[10^{-2},10^2]$. For each replication, the $\lambda$ value that minimizes the BIC criterion is retained for selecting the significant fixed effects. Denoting by $\beta^{**}=(1,-1,-1,1,0,\dots,0)$, the mean squared error $\text{MSE}=\mathbb{E}\left[\|\hat{\beta}-\beta^{**}\|_2^2\right]$ (with $\hat{\beta}$ coming from lasso or IWR) is computed over the 100 replications. The cardinality of the estimated active set (i.e. $|S(\hat{\beta})|$, with $S(\hat{\beta})=\{k:\hat{\beta}_k\neq 0\}$) is computed as well as the proportion TP of true positive (i.e. the selected set is exactly the true one). We also compute the proportion TPC in which the selected set contains the true one and the proportion ZP in which the true zeros are estimated.
\begin{table}[h!]
\caption{{\bf  Selection performances comparison between LASSO and IWR procedures.}}
 \begin{center}
\begin{tabular}[c]{lcc}
 \toprule
\bf Performance criterion & LASSO & IWR \\
 \midrule 
\multirow{1}{1cm}{MSE}
& $1.200$  & $0.254$  \\
\hline
\multirow{1}{1cm}{$|S(\hat{\beta})|$}
& $4 (1.614)$ & $5 (1.250)$ \\ 
\hline
\multirow{1}{1cm}{TP}
& $0\%$ & $35\%$ \\
\hline
\multirow{1}{1cm}{TPC}
& $16\%$ & $90\%$ \\
\hline
\multirow{1}{1cm}{ZP}
& $98\%(0.028)$ & $98\%(0.023)$ \\
\bottomrule
\end{tabular}
 \end{center}
 \label{tab:comparison}
\end{table} 

The selection results based on the 100 simulated data sets are indeed summarized through five performance criteria that are contained in Table~\ref{tab:comparison}, where the numbers between parentheses are the standard deviations related to the criteria mean values (printed just before these parentheses). Information from Table~\ref{tab:comparison} show that the IWR procedure outperforms the LASSO one. IWR selects the correct model $35\%$ of the time when LASSO has never found it (see TP values in Table~\ref{tab:comparison}). For information, we use the R software package lmmlasso~\citep{schelldorfer2011lmmlasso} to perform LASSO selection. $90\%$ of the time, the model selected by IWR contains the true one against $16\%$ for LASSO (see TPC values from Table~\ref{tab:comparison}). Obviously, the estimations are of better qualities from IWR than from LASSO ($\text{MSE}=0.254$ for IWR, and $\text{MSE}=1.200$ for LASSO). 

The true zero estimation proportions (ZP) are computed in Table~\ref{tab:zp} which contains also their number of occurrence over the simulated data sets. 
\begin{table}[h!]
\caption{{\bf  True zero estimation proportions (ZP) with the number of occurrence over the 100 replications.}}
 \begin{center}
\begin{tabular}[c]{lccccccc|cccccc}
 \toprule
 & \multicolumn{7}{c}{ \bf LASSO} & \multicolumn{6}{c}{ \bf IWR} \\
 \midrule 
ZP & 0.88 & 0.90 & 0.92 & 0.94 & 0.96 & 0.98 & 1 & 0.90 & 0.92 & 0.94 & 0.96 & 0.98 & 1 \\
Number & 2 & 2 & 3 & 6 & 14 & 24 & 49 & 2 & 2 & 8 & 16 & 32 & 40 \\
\bottomrule
\end{tabular}
 \end{center}
 \label{tab:zp}
\end{table} 
The LASSO seems to find more often than IWR all the true zeros ($49\%$ for LASSO against $40\%$ for IWR, in Table~\ref{tab:zp}). This may explain the overfitting behavior of IWR ($|S(\hat{\beta})|=5$ for IWR and $|S(\hat{\beta})|=4$ for LASSO in Table~\ref{tab:comparison}). It therefore seems that LASSO has a stronger shrinkage capability than has IWS which shows in turn a somewhat overfitting behavior than LASSO. Through these simulations studies, it appears that the iteratively weighted aspect of the IWR procedure highers the shrinkage performance of the ordinary Ridge and results in a selection method having better performance than the LASSO.

Like in the case of LASSO, it is possible for IWR to take advantage of a warm start of the algorithm to obtain the full regularization path of the selection problem. 
\begin{figure}[!h]
	\begin{center}
		\includegraphics[width=0.6\textwidth,trim=0 0 0 30,clip]{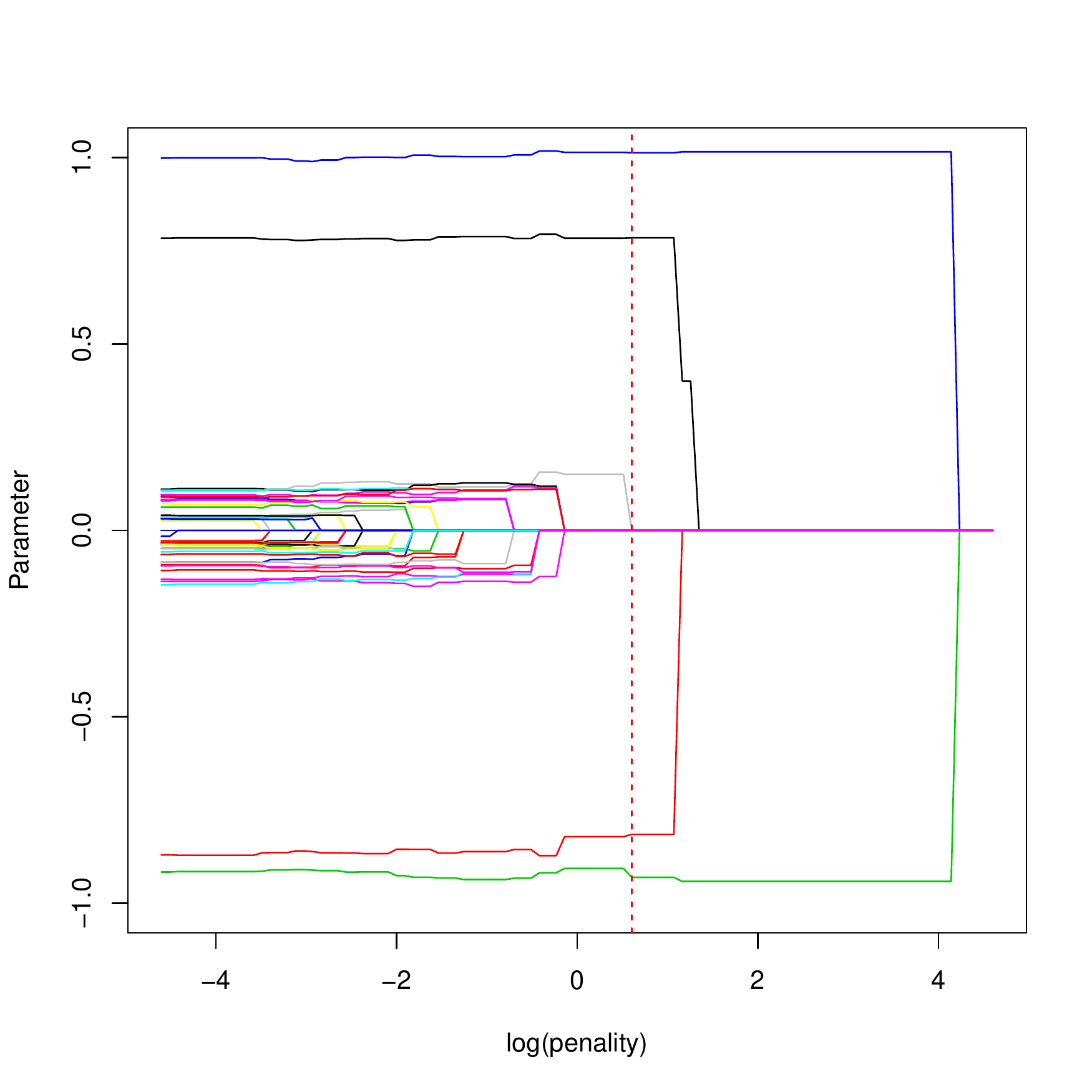}
	\end{center}
	\caption{\textbf{Example of a full regularization path for the iteratively weighted ridge selection procedure with $N=300$, $n=90$, $p=54$, $\sigma=1=\Gamma$. Only the first four components of $\beta^*$ are nonzero. The dataset is simulated using the R software with set.seed(3). The vertical red dashed bar corresponds to the minimum of BIC and shows the selected variables.}}
	\label{fig:fullregupath}
\end{figure}
For instance, Figure~\ref{fig:fullregupath} shows the full regularization path from one of the simulated data. The vertical red dashed bar corresponds to the minimum of BIC and shows the selected variables. On Figure~\ref{fig:fullregupath}, we clearly have $|S(\hat{\beta})|=4$, i.e., four variables selected at the end of the procedure.

\section{Conclusion}
In this paper, we have focused on the fixed-effects selection problem in the linear mixed-effects model. We have introduced an iteratively weighted ridge procedure which enhances the shrinkage performance of the ordinary ridge in order to approximate the performance of the $L_0$ based penalty selection. This selection method is based on an adaptive ridge penalty of the profiled likelihood, where the covariance matrix of the random effects is Cholesky factorized. The procedure fits both the low and the high-dimensional settings and may enjoy a great numeric stability since it needs no use of the inverse of the fixed-effects design matrix or the design matrix of the random-effects. Due the problems of the lack of convergence in the high-dimensional settings using available optimizers, we restrict the simulations studies to the low-dimensional case where we find out that our selection procedure outperforms the LASSO selection. In another forcoming paper, we will focus on this convergence problem in the high-dimensional cases with the same theoretical approach.

\newpage
\section*{References}

\bibliography{\jobname}

\end{document}